\newtheorem{prethm}{{\bf Theorem}}
\newenvironment{thm}{\begin{prethm}{\hspace{-0.5
               em}{\bf}}}{\end{prethm}}
\newtheorem{prepro}{{\bf Theorem}}
\newtheorem{preque}{{\bf Question}}
\newtheorem{preprop}{{\bf Proposition}}
\newtheorem{precor}{{\bf Corollary}}
\newenvironment{cor}{\begin{precor}{\hspace{-0.5
               em}{\bf}}}{\end{precor}}
\newtheorem{preconj}{{\bf Conjecture}}
\newtheorem{predefi}{{\bf Observation}}
\newenvironment{defi}{\begin{predefi}{\hspace{-0.5
               em}{\bf}}}{\end{predefi}}
\newtheorem{preprob}{{\bf Problem}}
\newenvironment{prob}{\begin{preprob}{\hspace{-0.5
               em}{\bf.}}}{\end{preprob}}
\newtheorem{preremark}{{\bf Remark}}
\newtheorem{preexample}{{\bf Example}}
\newtheorem{prelem}{{\bf Lemma}}
\newenvironment{lem}{\begin{prelem}{\hspace{-0.5
               em}{\bf}}}{\end{prelem}}
\newtheorem{prelam}{{\bf Lemma}}
\newtheorem{preproof}{{\bf Proof}}
\newenvironment{proof}[1]{\begin{preproof}{\rm
               #1}\hfill{$\Box$}}{\end{preproof}}
\newtheorem{preali}{{\bf Proof of Theorem 1.}}
\newenvironment{ali}[1]{\begin{preali}{\rm
               #1}\hfill{$\Box$}}{\end{preali}}
\newtheorem{prealy}{{\bf Proof of Theorem 2.}}
\newenvironment{aly}[1]{\begin{prealy}{\rm
               #1}\hfill{$\Box$}}{\end{prealy}}
\title{ The Complexity of the Proper Orientation Number}
\author{{\normalsize
{\sc A. Ahadi${}^{\mathsf{a}}$},\,
 {\sc A. Dehghan${}^{\mathsf{b}}$},\,
}\vspace{3mm}
\\{\footnotesize{${}^{\mathsf{a}}$\it Department of
Mathematical Sciences, Sharif University of Technology, Tehran,
Iran}}  {\footnotesize{}}\\{\footnotesize{${}^{\mathsf{b}}$\it
Department of Mathematics and Computer Science,
Amirkabir University of Technology, Tehran, Iran}}
\thanks{{\it E-mail addresses}:  $\mathsf{arash\_ahadi@mehr.sharif.edu}$,  $\mathsf{ali\_dehghan16@aut.ac.ir}$. } }
\date{
\small Mathematics Subject Classifications: 05C15, 05C20, 68Q25}
\begin{document}
\maketitle

\begin{abstract}
{\small \noindent
Graph orientation is a well-studied area of graph theory. A proper orientation of a graph $G = (V,E)$ is an orientation $D$ of $E(G)$ such that
for every two adjacent vertices $ v $ and $ u $, $  d^{-}_{D}(v) \neq d^{-}_{D}(u)$ where $d_{D}^{-}(v)$ is the
number of edges with head $v$ in $D$. The proper orientation number of $G$ is defined as $  \overrightarrow{\chi} (G) =\displaystyle \min_{D\in \Gamma} \displaystyle\max_{v\in V(G)}  d^{-}_{D}(v)  $ where $\Gamma$ is the set of proper orientations of $G$. We have $ \chi(G)-1 \leq  \overrightarrow{\chi} (G)\leq \Delta(G) $.
We show that, it is $ \mathbf{NP} $-complete to decide whether $\overrightarrow{\chi}(G)=2$, for a given planar graph $G$. Also, we prove that there is a polynomial time
algorithm for determining the proper orientation number of  $3$-regular graphs. In sharp contrast, we will prove that this problem is $ \mathbf{NP} $-hard for $4$-regular graphs.
}

\begin{flushleft}
\noindent {\bf Key words:}
Proper orientation;  Vertex coloring; NP-completeness; Planar 3-SAT (type 2); Graph orientation; Polynomial algorithms.

\end{flushleft}

\end{abstract}


\section{Introduction}
\label{}

\label{}
Graph orientation is a well-studied area of graph theory,  that provides a connection between directed and undirected graphs \cite{MR2178357}. There are several problems concerned with orienting the edges of an undirected
graph in order to minimize some measures in the resulting
directed graph, for instance see \cite{ MR0491326, MR2876334}. On the other hand, there are many ways to color the vertices of graphs properly. A proper vertex coloring of a digraph $D$ is defined, simply a vertex coloring of its
underlying graph $G$. The chromatic number of a digraph provides interesting information about
its subdigraphs. For instance, a theorem of Gallai   proves
that digraphs with high chromatic number always have long directed paths \cite{MR0233733}.

Venkateswaran \cite{MR2087903} initiated the study of the problem of orienting the edges of a given simple graph so that the maximum indegree of vertices is minimized. Afterwards, Asahiro et al. in \cite{MR2311131} generalized this problem for weighted graphs. It was proved that, this problem can be solved in polynomial-time if all the edge weights are identical \cite{MR2311131, MR2296137, MR2087903}, but it is $ \mathbf{NP} $-hard in general \cite{MR2311131}. Furthermore, the problem can be  solved in polynomial-time if the input
graph is a tree, but for planar bipartite graphs it is  $ \mathbf{NP} $-hard \cite{MR2311131}.
For more information about the recent results about this problem see \cite{DAM}.

On the other hand, in 2004 Karo\'nski, \L{}uczak and Thomason
initiated the study of proper labeling \cite{MR2047539}. They introduced an edge-labeling which
is additive vertex-coloring that means for every edge
$uv$, the sum of labels of the edges incident to $u$ is different
from the sum of labels of the edges incident to $v$ \cite{MR2047539}. Also,
it is conjectured that three labels
$\{1,2,3\}$ are sufficient for every connected graph, except
$K_2$ (1, 2, 3-Conjecture, see \cite{MR2047539}). This labeling have been studied extensively by several authors, for instance see \cite{MR2145514,   MR2404230, z2,David, MR2595676}.
Afterwards, Borowiecki et al. consider the directed version of this problem.
Let $D$ be a simple directed graph and suppose that each edge of $D$ is assigned an integer label. For a vertex $v$ of $D$, let $q^+(v)$
 and $q^-(v)$ be the sum of labels lying on the arcs outgoing form $v$ and incoming to $v$, respectively. Let $q(v) =
q^+(v) - q^-(v)$. Borowiecki et al. proved that  there is always a labeling from $\{1,2\}$, such that $q(v)$ is a proper coloring of $D$ \cite{MR2895496}, see also \cite{z3}.

Furthermore, Borowiecki et al. consider another version of above problems and they show that every undirected
graph $G$ can be oriented so that adjacent vertices have different in-degrees \cite{MR2895496}.
In this work, we  consider the problem of orienting the edges of an undirected graph such that adjacent vertices have different in-degrees and the maximum indegree of vertices is minimized in the resulting directed graph.

The
{\it proper orientations} of $G$
which are orientations $D$ of $G$
such that
for every two adjacent vertices $ v $ and $ u $, $  d^{-}_{D}(v) \neq d^{-}_{D}(u)$. The {\it proper orientation number} of $G$ is defined as
$  \overrightarrow{\chi} (G) =\displaystyle \min_{D\in \Gamma} \displaystyle\max_{v\in V(G)}  d^{-}_{D}(v)  $, where $\Gamma$ is the set of proper orientations of $G$.

The proper orientation number is well-defined and every proper orientation of graph introduces a proper vertex coloring for its vertices. Thus, $ \chi(G)-1 \leq  \overrightarrow{\chi} (G)$. On the other hand $  \overrightarrow{\chi} (G)\leq \Delta(G) $. Consequently,

\begin{equation}
\chi(G) -1\leq \overrightarrow{\chi} (G)\leq  \Delta(G).
\end{equation}

In this work,  we focus on regular graphs and planar graphs. We show that there is a polynomial-time
algorithm for determining the proper orientation number of  $3$-regular graphs. But  it is $ \mathbf{NP} $-complete to decide whether the proper orientation number of a given $4$-regular graph is $3$ or $4$.

\begin{thm}
Determining the proper orientation number of a
given $4$-regular graph is $ \mathbf{NP} $-hard; but there is a polynomial-time
algorithm to determine the proper orientation number for  $3$-regular graphs.
\end{thm}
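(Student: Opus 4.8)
The plan is to reduce each half of the statement to a single threshold decision, using the sandwich $\chi(G)-1\le \overrightarrow{\chi}(G)\le \Delta(G)$ from (1) together with a handshake count. For a cubic $G$ the sum of in-degrees is $|E|=3n/2$, so not every in-degree can be $\le 1$ and hence $\overrightarrow{\chi}(G)\ge 2$, while $\overrightarrow{\chi}(G)\le \Delta(G)=3$; thus $\overrightarrow{\chi}(G)\in\{2,3\}$ and the only thing to decide is whether $\overrightarrow{\chi}(G)=2$. Likewise, for a $4$-regular $G$ the in-degrees sum to $2n$, so a proper orientation of maximum in-degree $\le 2$ would need every in-degree to equal $2$, which violates the proper condition across any edge; hence $\overrightarrow{\chi}(G)\ge 3$, and since $\overrightarrow{\chi}(G)\le\Delta(G)=4$ we get $\overrightarrow{\chi}(G)\in\{3,4\}$. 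So the $4$-regular task is exactly to decide whether $\overrightarrow{\chi}(G)=3$, and I will show this decision problem is $\mathbf{NP}$-complete.

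For the cubic (polynomial) half I would prove the clean dichotomy that $\overrightarrow{\chi}(G)=2$ if and only if $G$ is bipartite. First I would record that in a proper orientation $D$ with $\max_v d^-_D(v)=2$, the vertices of in-degree $2$ are pairwise non-adjacent (adjacent vertices must differ), so they form an independent set of size $n_2$; writing $n_0,n_1,n_2$ for the numbers of vertices of each in-degree, the two equations $n_0+n_1+n_2=n$ and $n_1+2n_2=3n/2$ give $n_2=n_0+n/2\ge n/2$. Next I would observe that every cubic graph satisfies $\alpha(G)\le n/2$: for an independent set $I$ one has $3|I|=e(I,V\setminus I)\le \sum_{v\in V\setminus I}\deg v=3|V\setminus I|$, with equality forcing $V\setminus I$ to be independent as well, i.e. $G$ bipartite. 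Combining these, $\overrightarrow{\chi}(G)=2$ forces $n_2=n/2$, $n_0=0$, and $G$ bipartite. Conversely a bipartite cubic graph is $3$-edge-colourable by König's theorem, and orienting two of the three perfect matchings from one side to the other and the third matching back gives in-degrees $1$ on one part and $2$ on the other, a proper orientation of maximum in-degree $2$. The resulting algorithm is simply a bipartiteness test on each component: output $2$ if bipartite, else $3$.

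For the $4$-regular ($\mathbf{NP}$-hardness) half the plan is a reduction from a suitable $\mathbf{NP}$-complete satisfiability problem. Given a formula $\phi$, I would build a $4$-regular graph $G_\phi$ so that $\overrightarrow{\chi}(G_\phi)=3$ exactly when $\phi$ is satisfiable. The heart is the gadget design. I would build a \emph{variable gadget} admitting precisely two extensions to a proper orientation of maximum in-degree $3$, a ``true'' state and a ``false'' state, and exposing this bit through the orientation of the edges that leave the gadget. I would build a \emph{clause gadget} containing a distinguished vertex that can be kept at in-degree $\le 3$ if and only if at least one of its incident literal edges is oriented in the ``true'' direction; if all three point the ``wrong'' way this vertex is forced to in-degree $4$, making any global orientation non-$3$. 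Finally I would pad the construction with auxiliary vertices and edges so that every vertex has degree exactly $4$.

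The correctness argument then runs in both directions: from a satisfying assignment I assemble the corresponding local states into a global proper orientation of maximum in-degree $3$, and from any proper orientation of maximum in-degree $3$ I read off the variable states as a satisfying assignment, the clause gadgets guaranteeing that each clause is met. Membership in $\mathbf{NP}$ is immediate, since a candidate orientation can be checked in polynomial time, so the decision version ``$\overrightarrow{\chi}(G)=3$?'' is $\mathbf{NP}$-complete and determining $\overrightarrow{\chi}$ is $\mathbf{NP}$-hard. The hard part will be exactly the gadget engineering: forcing each gadget to have only its two intended stable states, keeping the entire graph exactly $4$-regular (the $4$-regularity severely limits how connector edges can be attached), and routing the literal edges so that values are transmitted faithfully without creating spurious in-degree collisions between neighbouring gadgets.
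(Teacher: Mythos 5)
Your treatment of the cubic half is correct and is essentially the paper's own argument (Lemma 1(ii) with $k=1$): the same counting $n_2=n_0+n/2\ge n/2$, the same observation that the in-degree-$2$ vertices are independent, the same fact that a regular graph with $\alpha(G)\ge n/2$ is bipartite, and the same K\H{o}nig-based converse orientation. If anything your packaging is slightly cleaner, since the direct dichotomy ``$\overrightarrow{\chi}(G)=2$ iff $G$ bipartite, else $3$'' avoids the paper's detour through Brooks' theorem and the $K_4$ exception. The reduction of both halves to a single threshold question ($\overrightarrow{\chi}\in\{2,3\}$ for cubic, $\overrightarrow{\chi}\in\{3,4\}$ for $4$-regular) also matches the paper's Observation 1.

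The $4$-regular half, however, has a genuine gap: it is not a proof but a declaration of intent. You never construct a variable gadget, never construct a clause gadget, never specify the source problem precisely, and never verify either direction of the reduction; you explicitly defer ``the gadget engineering,'' which is the entire mathematical content of the claim. The constraint you yourself flag --- that the final graph must be \emph{exactly} $4$-regular, so every connector edge consumes degree at both endpoints and no padding vertex may end up with degree $\neq 4$ --- is precisely what makes such ad hoc constructions delicate, and nothing in your outline indicates how to overcome it. The paper avoids gadgets altogether: it proves (Lemma 1(i), case $k=1$) that for a cubic graph $G$ one has $\overrightarrow{\chi}(L(G))=3$ if and only if $G$ is Class $1$. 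The key step is a counting argument: in any proper orientation $f$ of $L(G)$ with maximum in-degree $3$, the three vertices of each clique $\{uw : w\in N_G(u)\}$ have distinct in-degrees at most $3$, and summing over all such cliques forces these in-degrees to be exactly $\{1,2,3\}$ at every clique, so $d^-_f$ is a proper $3$-edge-coloring of $G$; conversely, a $3$-edge-coloring of $G$ converts into such an orientation by orienting the monochromatic-pair cycles and the cross edges according to color differences. Since $L(G)$ is automatically $4$-regular and Holyer proved that deciding $3$-edge-colorability of cubic graphs is $\mathbf{NP}$-hard, the hardness of determining $\overrightarrow{\chi}$ for $4$-regular graphs follows with no degree bookkeeping at all. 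Until you actually exhibit gadgets with the properties you promise and prove both directions of your reduction, your hardness half remains unestablished.
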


It is easy to see that $ \overrightarrow{\chi}(G)=1 $ if and only if every connected component of $G$ is a star. But for $ \overrightarrow{\chi}(G)=2 $, we have the following:

\begin{thm}
It is $ \mathbf{NP} $-complete to decide $\overrightarrow{\chi}(G)=2$, for a given planar graph $G$.
\end{thm}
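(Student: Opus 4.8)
The plan is to prove the statement by reduction from \textbf{Planar 3-SAT (type 2)}, which is known to be $\mathbf{NP}$-complete and whose variable--clause incidence structure admits a planar embedding. Membership in $\mathbf{NP}$ is immediate: given an orientation $D$ of $G$ one checks in polynomial time that $\max_{v} d^-_D(v) \le 2$ and that $d^-_D(u) \neq d^-_D(v)$ for every edge $uv$, so a candidate orientation is an efficiently verifiable witness. For hardness I would, given a formula $\varphi$, build in polynomial time a planar graph $G_\varphi$ together with a global argument showing $\overrightarrow{\chi}(G_\varphi) \ge 2$ always (e.g. by planting a triangle, or simply ensuring $G_\varphi$ is not a disjoint union of stars, which by the remark preceding the statement rules out $\overrightarrow{\chi} = 1$), so that deciding ``$\overrightarrow{\chi}(G_\varphi) = 2$'' becomes equivalent to deciding ``$\overrightarrow{\chi}(G_\varphi) \le 2$.'' The construction must then satisfy: $\varphi$ is satisfiable $\iff$ $G_\varphi$ admits a proper orientation with maximum in-degree at most $2$.

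The central observation driving the gadgets is that a proper orientation with $\max_v d^-_D(v) \le 2$ is exactly a labelling of the vertices by in-degrees in $\{0,1,2\}$ that is (i) \emph{proper}, i.e. assigns distinct values to adjacent vertices, and (ii) \emph{realizable}, i.e. actually induced by an orientation (so the labels sum to $|E|$ and obey the subset/flow feasibility conditions). Combining a colouring constraint with a realizability constraint is what lets small subgraphs ``force'' in-degrees: a triangle, for instance, must receive the labels $\{0,1,2\}$, and attaching such forcing blocks lets me pin the in-degree of chosen vertices. Using this, I would design a \emph{variable gadget} that, in any max-in-degree-$2$ proper orientation, is locked into exactly one of two states, interpreted as \texttt{true}/\texttt{false}, and that exports this bit along ``literal'' edges or vertices whose forced in-degree reflects the chosen state; a variable with several occurrences gets several copies of the output, chained so that they are all forced to agree.

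Next I would build a \emph{clause gadget} on three incoming literal lines that can be completed to a proper orientation of maximum in-degree $2$ if and only if at least one of the three literals arrives in its satisfying state: the gadget is arranged so that three simultaneously false inputs push a central vertex (or a pair of adjacent vertices) into a configuration where it must either receive in-degree $3$ or collide with an equal neighbour, violating (i) or (ii), while any single true input relieves the pressure and admits a completion. Wiring variable gadgets to clause gadgets according to a fixed planar embedding of the incidence graph of $\varphi$, and routing the literal lines so they cross nothing (this is where the planarity of Planar 3-SAT (type 2), and in particular its prescribed cyclic arrangement of positive/negative occurrences around each variable, is used), yields a planar $G_\varphi$ of polynomial size. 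Correctness then follows in two directions: a satisfying assignment translates gadget-by-gadget into a consistent max-in-degree-$2$ proper orientation, and conversely any such orientation reads off a satisfying assignment via the forced states of the variable gadgets.

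The step I expect to be the main obstacle is the simultaneous design of the variable and clause gadgets so that the forcing is \emph{tight} --- that the only max-in-degree-$2$ proper orientations are the intended ones, with no ``cheating'' completion that satisfies a clause gadget while all three of its literals are false --- \emph{and} planar, since the orientation/realizability constraint is global (the in-degrees must sum to $|E|$ and respect the flow conditions) and does not localise as cleanly as a pure colouring constraint would. Establishing tightness will require a careful case analysis of the admissible in-degree labellings of each gadget, and keeping the output lines of multiply-occurring variables planar will lean on the structural guarantees of the type-2 variant rather than on generic Planar 3-SAT.
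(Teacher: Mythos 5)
Your framework coincides with the paper's own proof in every structural respect: the same source problem (Planar 3-SAT, type 2), the same NP-membership certificate, the same key mechanism (a triangle's vertices must receive in-degrees $\{0,1,2\}$, and this forcing pins the in-degrees of attached vertices), variable gadgets locked into one of two states read as truth values, and clause gadgets that admit a completion if and only if some literal arrives true. The genuine gap is that you never exhibit the gadgets. In a reduction proof the gadgets \emph{are} the proof: everything you call a ``design requirement'' --- tight forcing, no cheating completions, planarity of the wiring --- is a property that can only be verified against concrete graphs, and you explicitly defer their construction (``the step I expect to be the main obstacle''). As written, the two directions of correctness (satisfying assignment $\Rightarrow$ orientation, orientation $\Rightarrow$ satisfying assignment) cannot even be attempted, because there is no object to orient and no case analysis to run.

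For comparison, here is how small the missing content actually is in the paper. The variable gadget $H(x_i)$ is the edge $x_i \neg x_i$ plus a triangle $x_i^1 x_i^2 x_i^3$ with $x_i^1$ joined to both $x_i$ and $\neg x_i$. Since the triangle's in-degrees must be exactly $\{0,1,2\}$ and these already sum to $3$, the number of triangle edges, no edge may enter the triangle from outside; hence $x_i^1 \to x_i$ and $x_i^1 \to \neg x_i$ are forced, giving $\{d^-_D(x_i), d^-_D(\neg x_i)\} = \{1,2\}$, which encodes the truth value and also forces every literal-to-clause edge to point into the clause gadget (so each clause-input vertex $s^t_j$ has in-degree $3 - d^-_D(v_i) \in \{1,2\}$). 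The clause gadget $T(c_j)$ is a fixed twelve-vertex graph on $s^1_j, \ldots, s^{12}_j$ in which the assumption that all three inputs satisfy $d^-_D(s^t_j)=1$ (all literals false) propagates along the edges $s^t_j s^{t+3}_j$ and two internal triangles until some vertex ($s^4_j$ in the paper) is forced to have in-degree $3$, a contradiction; conversely, for each of the three cases with at least one true literal an explicit completing orientation is displayed. Producing graphs of this kind, plus that short two-sided verification, is precisely what separates your plan from a proof.
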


In this paper we consider simple graphs
and we refer to \cite{bondy} for standard notation and concepts.
For a graph $G$, we use $n$ and $m$ to denote its numbers of
vertices and edges, respectively. Also, for every $v\in V (G)$, $d(v)$ and $N_G (v)$ denote the degree of $v$ and the neighbor set of $v$, respectively. A {\it spanning subgraph} of a graph $G$ is a subgraph whose vertex set is $V(G)$.
We say that a set of vertices are {\it independent} if there is no edge
 between these vertices.
 The {\it independence number}, $\alpha(G)$ of a graph $G$ is the size of a largest independent set of
 $G$. Also a {\it clique} of a graph is a set of mutually adjacent vertices.
We denote the maximum degree of a graph $G$ by $\Delta(G)$.
A {\it directed graph } $G$ is an ordered pair $(V (G),E(G))$ consisting of a set
$V (G)$ of vertices and a set $E(G)$ of edges, with an incidence function $ D $ that associates with each edge of $G$ an ordered pair of vertices of $G$. If $e=uv$ is an edge and $ G(e) = u \rightarrow v$, then
$e$ is from $u$ to $v$. The vertex $u$ is the tail of $e$, and the vertex $v$ its head.
Note that every orientation $D$ of a graph, introduced a digraph.
The indegree $d_{D}^{-}(v)$ of a vertex $v$ in $D$ is the
number of edges with head $v$ of $v$.
For $ k\in \mathbb{N} $, a {\it proper vertex $k$-coloring} of $G$ is a function $c:
V(G)\longrightarrow \lbrace 1,\ldots , k \rbrace$, such that if $u,v\in V(G)$ are adjacent,
then $c(u)$ and $c(v)$ are different.
The smallest integer $k$ such that
$G$ has a proper vertex $k$-coloring is called the {\it chromatic number} of $G$ and denoted by $\chi(G)$.
.Similarly, for $ k\in \mathbb{N} $, a {\it proper edge $k$-coloring} of $G$ is a function $c:
E(G)\longrightarrow \lbrace 1,\ldots , k \rbrace$, such that if $e,e'\in E(G)$ share a common endpoint,
then $c(e)$ and $c(e')$ are different.
The smallest integer $k$ such that
$G$ has a proper edge $k$-coloring is called the {\it edge chromatic number} of $G$ and denoted by $\chi '(G)$. By Vizing's theorem \cite{MR0180505}, the edge chromatic number of a graph $G$ is equal to either $ \Delta(G) $ or $ \Delta(G) +1 $. Those graphs
$G$ for which $\chi '(G)=\Delta(G)  $ are said to belong to $Class$ $1$, and the others to $Class$ $2$.
For a graph $G=(V,E)$, the {\it line graph} $G$ is denoted by $L(G)$, is a graph with the set of vertices $E(G)$ and two vertices of $L(G)$ are adjacent if and only if their corresponding edges share a common endpoint in $G$.
For simplicity and with a slight abuse of notation, we also denote by $D$ the digraph resulting from an orientation $D$ of the graph $G$.

\section{Regular graphs}

Let $G$ be an $r$-regular graph  and suppose that $D$ is a proper orientation of $G$ with maximum indegree $ \overrightarrow{\chi} (G)  $. We have
$ \overrightarrow{\chi} (G)> \frac{1}{n}  \sum_{v\in V(G)}d^{-}_{D}(v)=\frac{r n/ 2}{n}$.
So we have the following simple observation about regular graphs.

 \begin{defi}
For every $r$-regular graph $G $ with $ r\neq 0 $, $ \overrightarrow{\chi} (G)\geq \lceil \frac{r+1}{2}\rceil$.
\end{defi}

Here, we present a lemma about $(2k+1)$-regular graphs,
then we use it to prove Theorem $1$.

\begin{lem}
For every $(2k+1)$-regular graph $G$, $ k \in \mathbb{N} $,

 ($i$) $  \overrightarrow{\chi} (L(G))=3k $ if and only if $G$ belongs to $Class$ $1$.

($ii$) $  \overrightarrow{\chi} (G)=k+1$ if and only if $ \chi(G)=2 $.

\end{lem}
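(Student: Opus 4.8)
The plan is to treat the two statements separately, since part ($ii$) rests on a degree-counting argument while part ($i$) exploits the clique decomposition of the line graph.

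For part ($ii$), note first that $G$ is $(2k+1)$-regular, so Observation 1 already gives $\overrightarrow{\chi}(G)\ge \lceil (2k+2)/2\rceil = k+1$; thus in both directions it suffices to compare with the value $k+1$. For the direction $\chi(G)=2\Rightarrow\overrightarrow{\chi}(G)=k+1$, I would use that a $(2k+1)$-regular bipartite graph with parts $A,B$ is $1$-factorable (König's theorem), decompose $E(G)$ into $2k+1$ perfect matchings, orient $k+1$ of them from $A$ to $B$ and the remaining $k$ from $B$ to $A$. Then every vertex of $A$ has indegree $k$ and every vertex of $B$ has indegree $k+1$; since all edges join $A$ to $B$ this is a proper orientation with maximum indegree $k+1$, so $\overrightarrow{\chi}(G)\le k+1$, and equality follows.

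For the converse, fix an optimal proper orientation $D$ with maximum indegree $k+1$ and let $C_i=\{v: d^-_D(v)=i\}$, $n_i=|C_i|$, which are independent sets for $0\le i\le k+1$. Counting edges gives $\sum_i i\,n_i = |E(G)| = (2k+1)n/2$, which rearranges to $n_{k+1}=\sum_{i=0}^{k}(2(k-i)+1)\,n_i$, so $n_{k+1}\ge \sum_{i=0}^{k} n_i$ with equality only if $n_i=0$ for all $i<k$. On the other hand $C_{k+1}$ is independent, so all $(2k+1)n_{k+1}$ edge-endpoints at $C_{k+1}$ land in $L:=C_0\cup\cdots\cup C_k$, whence $(2k+1)n_{k+1}\le (2k+1)|L|$, i.e. $n_{k+1}\le\sum_{i=0}^{k}n_i$. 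The two inequalities force $n_0=\cdots=n_{k-1}=0$, so only indegrees $k$ and $k+1$ occur; as both classes are nonempty and independent, $(C_k,C_{k+1})$ is a bipartition and $\chi(G)=2$. I expect this capacity bound $n_{k+1}\le|L|$ played against the counting identity to be the crux of the direction.

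For part ($i$), the key structural fact is that $L(G)$ is $4k$-regular and decomposes into the edge-disjoint cliques $Q_v\cong K_{2k+1}$ (one per vertex $v$ of $G$), each vertex of $L(G)$ lying in exactly two of them, with $E(L(G))=\bigsqcup_v E(Q_v)$. For the lower bound I would argue, for any proper orientation, that within each $Q_v$ the $2k+1$ indegrees are distinct, so their sum is at most $(2k+1)(M_v-k)$ where $M_v$ is the largest; since this sum equals $|E(Q_v)|+S_v$ with $S_v$ the contribution from the partner cliques and $\sum_v S_v=|E(L(G))|=nk(2k+1)$, averaging yields $\operatorname{avg}_v M_v\ge 3k$, hence $\overrightarrow{\chi}(L(G))\ge 3k$ for every $(2k+1)$-regular $G$. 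If $\overrightarrow{\chi}(L(G))=3k$ then all these inequalities are tight, so each $Q_v$ has indegrees exactly $\{k,k+1,\dots,3k\}$; reading off the indegree of each $L(G)$-vertex as the colour of the corresponding edge of $G$ then gives a proper $(2k+1)$-edge-colouring, i.e. $G$ is Class $1$. Conversely, starting from a proper $(2k+1)$-edge-colouring $\kappa$, I would build an orientation of $L(G)$ in which every $L(G)$-vertex $e$ has indegree $\kappa(e)+k\in\{k,\dots,3k\}$ by choosing, in each clique, a tournament whose scores are compatible with $\kappa$ at both endpoints of every edge. Checking that such mutually compatible tournaments exist is the main obstacle; I would attack it with a flow/Hall-type feasibility argument, after which maximum indegree $3k$ together with the lower bound gives the asserted equality.
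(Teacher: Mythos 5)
Your part (ii) is correct and follows essentially the paper's route: the same K\H{o}nig 1-factorization construction for the bipartite direction, and for the converse the same combination of the degree-sum identity with the independence of the top indegree class. (Your refinement, playing the identity $n_{k+1}=\sum_{i\le k}(2(k-i)+1)n_i$ against the capacity bound $n_{k+1}\le \sum_{i\le k}n_i$ to force $n_i=0$ for $i<k$, is slightly sharper than the paper's "$\alpha(G)\ge n/2$ in a regular graph forces bipartiteness", but it is the same idea.) Likewise, in part (i) your unconditional lower bound $\overrightarrow{\chi}(L(G))\ge 3k$ and the tightness argument showing that an orientation of maximum indegree $3k$ forces each clique $Q_v$ to carry indegrees exactly $\{k,\dots,3k\}$, hence a proper $(2k+1)$-edge-coloring, is precisely the paper's double count over the cliques of the line graph.

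The genuine gap is the remaining direction of (i): Class 1 $\Rightarrow \overrightarrow{\chi}(L(G))\le 3k$. You explicitly leave its core unproved: you want tournaments inside the cliques $Q_v$ whose scores add up correctly at every vertex of $L(G)$ (each vertex lies in exactly two cliques), and you defer the existence of such a mutually compatible family to an unspecified "flow/Hall-type feasibility argument". That compatibility is a globally coupled constraint system, not an off-the-shelf application of Hall's theorem, and without it this direction has no proof; it is exactly the part of the lemma that requires a construction. (There is also an off-by-one: your target indegree $\kappa(e)+k$ ranges over $\{k+1,\dots,3k+1\}$, which already exceeds $3k$; you need $\kappa(e)+k-1$.) The paper avoids any feasibility argument by an explicit orientation: given a proper edge coloring $c$ with colors $1,\dots,2k+1$, orient every edge of $L(G)$ joining color classes $p,q$ with $|p-q|>k$ toward the vertex of larger color; for each pair $p,q$ with $0<|p-q|\le k$, the subgraph of $L(G)$ induced on $c^{-1}(p)\cup c^{-1}(q)$ is $2$-regular (each color class of a regular Class 1 graph is a perfect matching, so these components are even cycles), and each such cycle is oriented as a directed cycle, so every vertex gains exactly one incoming edge per nearby color class. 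A short count then gives $d^-_D(e)=c(e)+k-1\in\{k,\dots,3k\}$ for every vertex $e$, which is proper (being an injective function of the color) and has maximum indegree $3k$. If you want to rescue your tournament-splitting plan, this construction tells you what the splits must be; as written, however, your proposal does not establish this direction.
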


\begin{proof}{

($i$) First, let $G$ be a $(2k+1)$-regular graph belonging to $Class$ $1$. $G$ has a proper edge coloring $ c $, such that $ c:E(G)\rightarrow \lbrace 1,\ldots , 2k+1\rbrace $. Let $\lbrace e_{1},\ldots , e_{m} \rbrace$ be the set of edges of $G$ and $ V(L(G))=E(G) $. Now, we present a proper orientation for $L(G)$ with maximum indegree $ 3k $. For every edge $ e_i e_j \in E(L(G))$, such that $  c(e_j)- c(e_i)> k $, orient $ e_i e_j $ from $ e_i  $ to $ e_j  $.
Also for every two numbers $ p,q \in  \lbrace 1,\ldots , 2k+1\rbrace $, such that $ \vert p - q\vert \leq k $, let $ H_{p,q} $ be the induced subgraph on the vertices $ c^{-1}(p) \cup c^{-1}(q) $ in $L(G)$. Since $G$ is a regular graph that belonging to $Class$ $1$, $ H_{p,q} $ is a spanning $2$-regular subgraph of $L(G)$. Therefore every component of $ H_{p,q} $ is an even cycle. Orient each of these cycles to obtain a directed cycle. We denote the resulting orientation by $D$. We claim that $D$ is a proper orientation of $L(G)$, with maximum indegree $3k$. Since $c$ is a proper edge coloring of $G$, it is sufficient to prove that for every vertex $ e_i \in V(L(G)) $, we have $ d^{-}_{D}(e_i)= c(e_i)+(k-1) $.
Let $e$ be a vertex of $L(G)$.
First suppose that $ c(e) < k $. For every number $i$, $ i\in \lbrace 1,\ldots , 2k+1\rbrace \setminus \lbrace c(e) \rbrace$, the vertex $e$, has exactly two neighbors $ e^{'}_{i} $ and $ e^{''}_{i} $, with $ c(e^{'}_{i})=c(e^{''}_{i})=i $ in $L(G)$. In $D$ for every $i\neq c(e)$, $ 1 \leq i \leq c(e)+k $, $e$ has exactly one incoming edge from the set $ \lbrace e^{'}_{i} , e^{''}_{i} \rbrace $. Also for every $i$, $ c(e)+k < i \leq 2k+1 $, $e$ doesn't have any incoming edge from the set $\lbrace e^{'}_{i} , e^{''}_{i}  \rbrace$. Therefore $ d^{-}_{D}(e_i)= c(e_i)+k-1  $.
For every edge $e$ with the property $ c(e) \geq k $, by a similar argument, we have $ d^{-}_{D}(e_i)= c(e_i)+k-1  $.
Consequently $  \overrightarrow{\chi} (L(G)) \leq 3k $ (in the following we see that if $ \overrightarrow{\chi} (L(G)) \leq 3k  $, then $ \overrightarrow{\chi} (L(G)) = 3k  $).

On the other hand, suppose that $G$ is a $(2k+1)$-regular graph and $  \overrightarrow{\chi} (L(G)) \leq 3k $. Let $ f $ be a proper orientation of $L(G)$ with the maximum indegree at most $  3k $. Let $u$ be an arbitrary vertex of $G$ and $N_G (u)= \lbrace w_1, \ldots , w_{2k+1} \rbrace$. Since $f$ is a proper orientation of $L(G)$ and $\lbrace uw_1, \ldots , uw_{2k+1} \rbrace  $ is a clique in $L(G)$, so $  d_{f}^{-}(uw_1), \ldots , d_{f}^{-}(uw_{2k+1}) $ are distinct numbers. Consequently

\begin{center}
$\displaystyle\sum _{i=1}^{2k+1}d^{-}_{f}(uw_i) \leq  (3k)+(3k-1)+\cdots +(3k-(2k))=2k(2k+1) $,

$ \displaystyle \sum_{v \in V(G)}\displaystyle\sum _{vw\in E(G)}d^{-}_{f}(vw)  \leq   2k(2k+1)n$.
\end{center}

On the other hand, by counting the number of edges in two ways, we have:

\begin{center}
$ \displaystyle\sum_{v \in V(G)}\displaystyle\sum _{  vw\in E(G)}d^{-}_{f}(vw)=2 \big ( \displaystyle\sum_{e \in V(L(G))} d^{-}_{f}(e) \big ) = 2 \vert E(L(G))\vert= 2k(2k+1)n $.
\end{center}

Thus, for every $v\in V(G)$ we have:

\begin{center}
 $ \displaystyle\sum _{ v w\in E(G)}d^{-}_{f}(wv)= 2k(2k+1)$.
\end{center}

It means that for every $v\in V(G)$, $   \lbrace d^{-}_{f}(wv) \vert v w\in E(G) \rbrace = \lbrace k,k+1, \ldots , 3k \rbrace  $. Therefore $\overrightarrow{\chi} (L(G))= 3k$ and also the function $ d^{-}_{f} $ is a proper edge coloring of $G$ with $ 2k+1 $ colors. It means that $G$ belongs to $Class$ $1$.

($ii$) First suppose that $  \overrightarrow{\chi} (G)=k+1$ through a proper orientation $D$. For every $i$, denote the number of vertices $ v\in V(G) $ with the indegree $i$, by $ n_i $. We have: $ \sum_{i=0}^{k+1}n_i =n$ and $\sum_{i=0}^{k+1}i \cdot n_i = m = \frac{2k+1}{2} n  $. So $ n_{k+1} \geq  m - kn =\frac{n}{2}$. The set of vertices $v$ with $ d_{D}^{-}(v)=k+1 $, forms an independent set. Obviously every regular graph with $ \alpha (G) \geq \frac{n}{2} $, is bipartite. Thus, $ \chi(G)=2 $.

Next, let $G[X,Y]$ be a bipartite $(2k+1)$-regular graph. By Observation $1$, $\overrightarrow{\chi} (G) \geq k+1$. By
K\H onig's theorem \cite{MR1367739}, $G$ has a decomposition into $2k+1 $ perfect matchings. Orient the edges of $ k+1 $ perfect matchings from $X$ to $Y$, and other edges from $Y$ to $X$. It is easy to see that this is a proper orientation with maximum indegree $ k+1 $.
}\end{proof}

\begin{cor}
For every $3$-regular graph $G$, other than $ K_4 $, $  \overrightarrow{\chi} (G)=\chi(G) $.
\end{cor}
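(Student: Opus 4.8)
The plan is to trap $\overrightarrow{\chi}(G)$ inside the two-element set $\{2,3\}$ and then read off its exact value from whether $G$ is bipartite, invoking Lemma $1$($ii$) in the smallest case $k=1$. We may assume $G$ is connected; otherwise the same argument is applied to each component (none of which is $K_4$), since both $\chi$ and $\overrightarrow{\chi}$ of a graph equal the maximum of the corresponding parameters over its components. For a connected cubic graph we have two cheap bounds: by Observation $1$ with $r=3$, $\overrightarrow{\chi}(G) \geq \lceil \frac{4}{2}\rceil = 2$, while the universal inequality $\overrightarrow{\chi}(G) \leq \Delta(G)$ gives $\overrightarrow{\chi}(G) \leq 3$. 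Hence $\overrightarrow{\chi}(G) \in \{2,3\}$.

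Next I would locate $\chi(G)$. Since $G$ is $3$-regular it contains an edge, so $\chi(G) \geq 2$. For the upper bound I would appeal to Brooks' theorem \cite{bondy}: a connected graph that is neither complete nor an odd cycle satisfies $\chi \leq \Delta$. A connected cubic graph is not an odd cycle (odd cycles are $2$-regular) and is complete only when it is $K_4$; as $G \neq K_4$ by hypothesis, Brooks' theorem yields $\chi(G) \leq 3$. Therefore $\chi(G) \in \{2,3\}$ as well, and it remains to show the two parameters agree in both cases.

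Finally I would match the values through Lemma $1$($ii$). Writing $3 = 2k+1$ with $k=1$, that lemma states $\overrightarrow{\chi}(G) = k+1 = 2$ if and only if $\chi(G) = 2$. If $G$ is bipartite then $\chi(G) = 2$ and the lemma gives $\overrightarrow{\chi}(G) = 2 = \chi(G)$. If $G$ is not bipartite then $\chi(G) = 3 \neq 2$, so the lemma forces $\overrightarrow{\chi}(G) \neq 2$; combined with $\overrightarrow{\chi}(G) \in \{2,3\}$ this gives $\overrightarrow{\chi}(G) = 3 = \chi(G)$. The only delicate point is the hypothesis $G \neq K_4$: it is exactly the case excluded by Brooks' theorem, and indeed for $K_4$ one has $\chi(K_4) = 4$ while $\overrightarrow{\chi}(K_4) = 3$ (forced by $\chi(G) - 1 \leq \overrightarrow{\chi}(G) \leq \Delta(G)$), so equality genuinely fails there. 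No other obstruction arises, and the remaining steps are immediate.
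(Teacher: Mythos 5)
Your argument is exactly the paper's intended (implicit) proof of this corollary: Observation 1 gives $\overrightarrow{\chi}(G)\geq 2$, the universal bound gives $\overrightarrow{\chi}(G)\leq\Delta(G)=3$, Brooks' theorem gives $\chi(G)\in\{2,3\}$ for a connected cubic $G\neq K_4$, and Lemma 1($ii$) with $k=1$ then matches the two parameters in both cases. For connected $G$ this is complete and correct, and it is also the chain of facts the paper itself invokes in the proof of Theorem 1.

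There is, however, one genuine error: in your reduction to the connected case you assert that if $G\neq K_4$ then no component of $G$ is $K_4$. That implication is false --- the disjoint union $K_4\sqcup K_4$ is a $3$-regular graph other than $K_4$ whose components are both $K_4$ --- and on that graph the statement itself fails, since $\chi(K_4\sqcup K_4)=4$ while $\overrightarrow{\chi}(K_4\sqcup K_4)=3$ (orient each component as a transitive tournament). So the corollary is only true under the reading ``connected $3$-regular graph other than $K_4$'' (equivalently, ``$3$-regular graph with no $K_4$ component''), which is evidently the paper's intent, as its own appeal to Brooks' theorem carries the same implicit connectivity assumption. Under that reading your proof is correct as it stands; but the parenthetical ``none of which is $K_4$'' is not a harmless remark, and a careful write-up should either assume connectivity outright or strengthen the hypothesis to exclude $K_4$ components.
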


\begin{ali}{
Clearly, the problem is in NP. It was shown that it is $ \mathbf{NP} $-hard to determine the edge chromatic number of a
cubic graph \cite{MR635430}. By Lemma $1$, for every cubic graph $G$, $  \overrightarrow{\chi} (L(G))=3 $, if and only if $G$ belongs to $Class$ $1$. So determining the proper orientation number of a $4$-regular graph is $ \mathbf{NP} $-hard.
For the second part of the theorem, let $G$ be a $3$-regular graph, other than $K_4$, by Brooks' Theorem \cite{MR0012236} $ \chi(G) \leq 3 $. There is a polynomial-time algorithm for determining whether a given graph $G$ has a chromatic number at most $2$. So by Corollary $1$, there is a polynomial-time
algorithm for determining the proper orientation number of a
given $3$-regular graph.

}\end{ali}

It was shown that it is $ \mathbf{NP} $-hard to determine the edge chromatic number of an
$r$-regular graph for any $r \geq 3$ \cite{MR689264}. So by Lemma $1$ we have:

\begin{cor}
For any $r \geq 1$, the following problem is $ \mathbf{NP} $-hard: "Given a $4r$-regular graph $G$, determine $\overrightarrow{\chi}(G)$".
\end{cor}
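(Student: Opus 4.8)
The plan is to obtain the result as a direct consequence of Lemma $1(i)$, by reducing from the problem of deciding membership in $Class$ $1$ for odd-regular graphs. Fix $r \geq 1$ and set $k = r$, so that $2k+1 = 2r+1 \geq 3$. The first step is the degree computation in the line graph: if $G$ is $(2r+1)$-regular, then every vertex $uv$ of $L(G)$ is adjacent precisely to the $2r$ edges incident to $u$ other than $uv$ together with the $2r$ edges incident to $v$ other than $uv$, so $L(G)$ is $4r$-regular. Thus the line-graph operation maps the class of $(2r+1)$-regular graphs into the class of $4r$-regular graphs, and clearly $L(G)$ can be constructed from $G$ in polynomial time.

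The second step is to invoke Lemma $1(i)$ with $k = r$: for a $(2r+1)$-regular graph $G$ we have $\overrightarrow{\chi}(L(G)) = 3r$ if and only if $G$ belongs to $Class$ $1$. Hence an oracle that returns the exact value of $\overrightarrow{\chi}(L(G))$ decides whether $G$ is $Class$ $1$, by simply testing whether the returned value equals $3r$; the ``if and only if'' in the lemma guarantees that $G \in Class$ $2$ forces $\overrightarrow{\chi}(L(G)) \neq 3r$, which is exactly what the reduction needs.

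Finally, I would cite \cite{MR689264}, which establishes that determining the edge chromatic number (equivalently, deciding $Class$ $1$ versus $Class$ $2$) of an $s$-regular graph is $\mathbf{NP}$-hard for every $s \geq 3$. Applying this with $s = 2r+1$, which is admissible since $2r+1 \geq 3$ for all $r \geq 1$, the map $G \mapsto L(G)$ becomes a polynomial reduction from an $\mathbf{NP}$-hard problem to the problem of determining $\overrightarrow{\chi}$ on $4r$-regular graphs, which proves the corollary.

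As for the main obstacle: essentially all the combinatorial work has already been carried out inside Lemma $1(i)$, so the remaining argument is short and the only point requiring care is that the hardness source must be guaranteed to produce odd-regular inputs, since Lemma $1(i)$ applies only to $(2k+1)$-regular graphs. This is handled by specializing the general $s$-regular hardness of \cite{MR689264} to the odd value $s = 2r+1$. I would also note that the range $r \geq 1$ in the statement corresponds exactly to $s = 2r+1 \geq 3$, so the hypothesis of the cited result is met for every admissible $r$ and no case is lost.
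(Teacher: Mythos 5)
Your proposal is correct and follows essentially the same route as the paper: the paper also obtains the corollary by combining Lemma~1($i$) with the $\mathbf{NP}$-hardness of determining the edge chromatic number of $s$-regular graphs for $s \geq 3$ from \cite{MR689264}, applied to $(2r+1)$-regular graphs whose line graphs are $4r$-regular. Your write-up merely spells out the details the paper leaves implicit (the degree count in $L(G)$, the choice $s=2r+1$, and why the ``if and only if'' makes the oracle test sound).
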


Here, we presents a simple $(2-\frac{2}{r+2})$-approximation algorithm
for finding a proper orientation with the minimum of maximum indegree between the proper orientations of $G$, for an $r$-regular graph $G$. It is sufficient to start with a vertex $v$ of maximum degree and for every edge $e=vu$, simply orient $ e $ from $ u $ to $ v $ and next in $ G'=G \setminus \lbrace v \rbrace $ repeat the above procedure.
By Observation $1$ we have  $ \overrightarrow{\chi} \geq \lceil \frac{r+1}{2}\rceil $, so the previous greedy algorithm is a $\theta$-approximation algorithm, where

\begin{center}
$\theta=  2 - \frac{2}{r+2} \geq  \frac{r}{\lceil \frac{r+1}{2}\rceil }$.
\end{center}

\section{Planar graphs}

Let $\Phi$ be a $3$-SAT formula with clauses $C=\lbrace
c_1, \cdots ,c_k\rbrace $ and variables
$X=\lbrace x_1, \cdots ,x_n\rbrace $. Let $G(\Phi)$ be a graph with the vertices $C \cup X \cup (\neg X)$, where $\neg X = \lbrace \neg x_1, \cdots , \neg x_n\rbrace$, such that for each clause $c_j=y \vee z \vee w $, $c_j$ is adjacent to $y,z$ and $w$, also every $x_i \in X$ is adjacent to $\neg x_i$. $\Phi$ is called planar $3$-SAT(type $2$) formula if $G(\Phi)$ is a planar graph. It was shown that the problem of satisfiability of planar $3$-SAT(type $2$) is $ \mathbf{NP}$-complete \cite{zhu1}. In order to prove Theorem $2$, we reduce the following problem to our problem.

\textbf{Problem}: {\em Satisfiability of planar $3$-SAT(type $2$).}\\
\textsc{Input}: A $3$-SAT(type $2$) formula  $ \Phi $.\\
\textsc{Question}: Is there a truth assignment for $ \Phi $ that satisfies all the clauses?\\

\begin{aly}{

Consider an instance of planar $3$-SAT(type $2$) $\Phi$ with variables
$X=\lbrace x_1, \cdots ,x_n\rbrace $ and clauses $C=\lbrace
c_1, \cdots ,c_k\rbrace $. We transform this into a planar graph $\widehat{G}(\Phi)$
such that $ \overrightarrow{\chi} (\widehat{G} (\Phi))=2 $ if and only if $\Phi$ is satisfiable.
We use two auxiliary graphs $ H(x_i)  $ and
$ T(c_j) $ which are shown in Figure 1.
We construct the planar graph $\widehat{G}(\Phi)$ from $G(\Phi)$,
 for every $x_i \in X$ replace the subgraph on $\neg x_i$ and $ x_i$ by $ H(x_i) $, also
replace every clause $c_j $ of $C$ by $ T(c_j) $. Furthermore,
for every clause $c_j$ with the literals $x$, $y$, $z$ add the edges $xc_j^1$, $yc_j^2$ and $zc_j^3$.
Call the resulting graph $\widehat{G}(\Phi)$.
 Clearly $ \overrightarrow{\chi}( \widehat{G}(\Phi)) \geq 2 $.

\begin{figure}[h]
\begin{center}
\includegraphics[scale=.7]{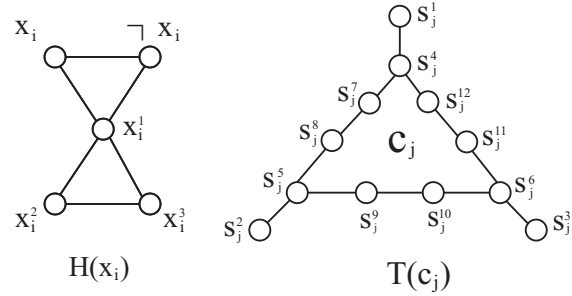}\label{bbbb}
\caption{The two auxiliary graphs $ H(x_i)  $ and $ T(c_j) $}
\end{center}
\end{figure}

First suppose that $ \overrightarrow{\chi} (\widehat{G}(\Phi)) =2 $ and $D$ is a proper orientation of $\widehat{G}(\Phi)$ with maximum indegree $2$. For each variables $x_i$, the subgraph on $x_i^1, x_i^2$ and $x_i^3$ is a triangle, so $\{ d_{D}^{-} (x_i^1),d_{D}^{-} ( x_i^2),d_{D}^{-} (x_i^3)\} =\{0,1,2\}$. Therefore the edges $x_ix_i^1$ and $(\neg x_i)x_i^1$ were oriented from $x_i^1$ to $x_i$ and $\neg x_i$. Consequently $\{ d_{D}^{-} (x_i),d_{D}^{-} (\neg x_i)\} =\{1,2\}$.
Let $x_i$ and $c_j$ be arbitrary variable and clause such that $x_i$ or $\neg x_i $ is used in $c_j$.
Since for every variable $x_i$, $\{ d_{D}^{-} (x_i),d_{D}^{-} (\neg x_i)\} =\{1,2\}$, therefore for every edge $e$ between $ H(x_i) $ and $ T(c_j) $, $e$ was oriented from $ H(x_i) $ to $ T(c_j) $. So we have the following.

\textbf{Fact $1$.} For every edge $e=v_i s^t_j$, $(v_i\in \{x_i, \neg x_i \}, s^t_j \in \{ s^1_j,s^2_j,s^3_j\})$, we have $d_{D}^{-} (s^t_j)= 3- d_{D}^{-} (v_i)$. $\lozenge$

Therefore, for every $t$, $1 \leq t \leq 3 $, two cases can be considered: $d_{D}^{-} (s^t_j)=1$ or $d_{D}^{-} (s^t_j)=2$.\\
\textbullet $ $ If $d_{D}^{-} (s^t_j)=1$, then $s^t_j  s^{t+3}_j   $ was oriented form $s^t_j$ to $s^{t+3}_j$ and so $d_{D}^{-} (s^{t+3}_j)=2$.\\
\textbullet $ $ If $d_{D}^{-} (s^t_j)=2$, then $s^t_j  s^{t+3}_j $ was oriented form $s^{t+3}_j$ to $s^t_j$ and so $d_{D}^{-} (s^{t+3}_j)\in \{ 0,1 \} $.

Let $\Gamma : X \rightarrow \{ True,False \} $ be a function such that if $ d_{D}^{-} (x_i)=1 $, then $\Gamma( x_i)= True$ and if $ d_{D}^{-} (x_i)=2 $, then $\Gamma( x_i)= False$. We show that $\Gamma$ is a satisfying assignment
for $ \Phi $. By Fact $1$, it is enough to show that for every clause $c_j$, there exists a $s^t_j \in \{ s^1_j,s^2_j,s^3_j\}$ such that $ d_{D}^{-} (s^t_j)=2$. To the contrary suppose that $ d_{D}^{-} (s^1_j)= d_{D}^{-} (s^2_j)= d_{D}^{-} (s^3_j)=1$, so, we have $ d_{D}^{-} (s^4_j)= d_{D}^{-} (s^5_j)= d_{D}^{-} (s^6_j)=2$. Since $D$ is a proper orientation, $\{d_{D}^{-} (s^7_j) ,d_{D}^{-} (s^8_j)\}= \{ 0,1\}$, so the edges $s^7_j s^4_j$ and $ s^8_j   s^5_j $ were oriented from $s^7_j$ to $s^4_j$ and $s^8_j$ to $s^5_j$, respectively. Similarly, $\{d_{D}^{-} (s^{11}_j) ,d_{D}^{-} (s^{12}_j)\}= \{ 0,1\}$ and the edge $ s^{12}_j  s^4_j$ was oriented from $s^{12}_j$ to $s^4_j$. It means $d_{D}^{-} (s^4_j)=3$, but this is a contradiction. So $\Gamma$ is a satisfying assignment for $ \Phi $.

Next, suppose that $ \Phi $ is satisfiable with the satisfying
assignment $ \Gamma $. We present a proper orientation $D$ with maximum indegree $2$.
Let $x_i$ and $c_j$ be arbitrary variable and clause such that $x_i$ or $\neg x_i $ is used in $c_j$. For every edge $e=v_i  s^t_j$, $(v_i\in \{x_i, \neg x_i \}, s^t_j\in \{ s^1_j,s^2_j,s^3_j\})$, orient $e$ from $v_i$ to $s^t_j$.
For every $1 \leq i \leq n$, for each edge $e $ that is adjacent to $x_i^1$, orient $e$ from $x_i^1$ to the other end of $e$ and orient $x_i^2x_i^3$ from $x_i^2$ to $x_i^3$. Also if $\Gamma (x_i)=True$, orient $x_i\neg x_i$ from $x_i$ to $\neg x_i$, otherwise orient $x_i\neg x_i$ from $\neg  x_i$ to $x_i$.

In order to orient the other edges, since $\Phi$ is satisfied by $\Gamma$ and by attention to the orientations of the edges $ x_i\neg x_i$ $( 1 \leq i \leq n )$, for every clause $c_j$, at least one of its literals has indegree $1$, therefore at least one of $s^1_j$,$s^2_j$,$s^3_j$ has indegree $2$. With no loss of generality, three cases can be considered, the suitable orientations for these three cases are presented in Figure 2. This completes the proof.

}\end{aly}

\begin{figure}[h]
\begin{center}
\includegraphics[scale=.6]{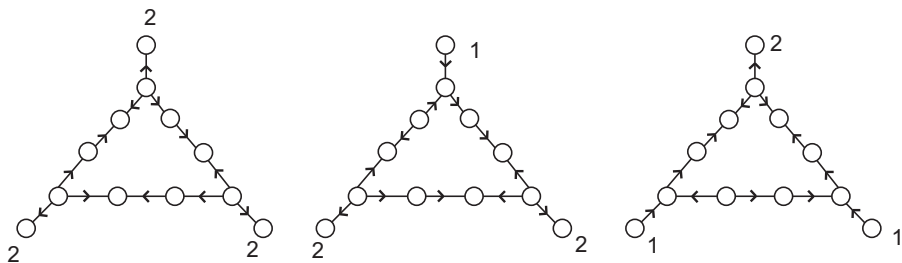}\label{a2}
\caption{The suitable orientations for three possible cases}
\end{center}
\end{figure}

Finding the optimal upper bound for $\overrightarrow{\chi}(G)$ seems to be an interesting intriguing problem. We state the following question for planar graphs.

\begin{prob}
Find the minimum number $k$ such that for every planar graph $G$, $\overrightarrow{\chi}(G)\leq k  $.
\end{prob}

\section{Acknowledgment}
\label{}

We would like to thank Professor Douglas B. West and Professor Ebad S. Mahmoodian for their valuable answers to our questions about the definition of proper orientation number.

\bibliographystyle{plain}
\bibliography{POref}

\begin{thebibliography}{10}

\bibitem{MR2145514}
L.~Addario-Berry, R.~E.~L. Aldred, K.~Dalal, and B.~A. Reed.
\newblock Vertex colouring edge partitions.
\newblock {\em J. Combin. Theory Ser. B}, 94(2):237--244, 2005.

\bibitem{MR2404230}
L.~Addario-Berry, K.~Dalal, and B.~A. Reed.
\newblock Degree constrained subgraphs.
\newblock {\em Discrete Appl. Math.}, 156(7):1168--1174, 2008.

\bibitem{DAM}
Yuichi Asahiro, Eiji Miyano, and Hirotaka Ono.
\newblock Graph classes and the complexity of the graph orientation minimizing
  the maximum weighted outdegree.
\newblock {\em Discrete Appl. Math.}, 159(7):498--508, 2011.

\bibitem{MR2311131}
Yuichi Asahiro, Eiji Miyano, Hirotaka Ono, and Kouhei Zenmyo.
\newblock Graph orientation algorithms to minimize the maximum outdegree.
\newblock {\em Internat. J. Found. Comput. Sci.}, 18(2):197--215, 2007.

\bibitem{z2}
T.~Bartnicki, J.~Grytczuk, and S.~Niwczyk.
\newblock Weight choosability of graphs.
\newblock {\em J. Graph Theory}, 60(3):242--256, 2009.

\bibitem{bondy}
J.~A. Bondy and U.~S.~R. Murty.
\newblock {\em Graph theory}.
\newblock Graduate Texts in Mathematics 244, Springer, 2008 (2nd ed.).

\bibitem{MR2895496}
Mieczys{\l}aw Borowiecki, Jaros{\l}aw Grytczuk, and Monika Pil{\'s}niak.
\newblock Coloring chip configurations on graphs and digraphs.
\newblock {\em Inform. Process. Lett.}, 112(1-2):1--4, 2012.

\bibitem{MR0012236}
R.~L. Brooks.
\newblock On colouring the nodes of a network.
\newblock {\em Proc. Cambridge Philos. Soc.}, 37:194--197, 1941.

\bibitem{MR0491326}
V.~Chv{\'a}tal and C.~Thomassen.
\newblock Distances in orientations of graphs.
\newblock {\em J. Combinatorial Theory Ser. B}, 24(1):61--75, 1978.

\bibitem{zhu1}
Ding-Zhu Du, Ker-K Ko, and J.~Wang.
\newblock {\em Introduction to Computational Complexity}.
\newblock Higher Education Press, 2002.

\bibitem{David}
Andrzej Dudek and David Wajc.
\newblock On the complexity of vertex-coloring edge-weightings.
\newblock {\em Discrete Math. Theor. Comput. Sci.}, 13(3):45--50, 2011.

\bibitem{MR2876334}
Nicole Eggemann and Steven~D. Noble.
\newblock The complexity of two graph orientation problems.
\newblock {\em Discrete Appl. Math.}, 160(4-5):513--517, 2012.

\bibitem{MR0233733}
T.~Gallai.
\newblock On directed paths and circuits.
\newblock In {\em Theory of {G}raphs ({P}roc. {C}olloq., {T}ihany, 1966)},
  pages 115--118. Academic Press, New York, 1968.

\bibitem{MR635430}
Ian Holyer.
\newblock The {NP}-completeness of edge-coloring.
\newblock {\em SIAM J. Comput.}, 10(4):718--720, 1981.

\bibitem{MR2595676}
Maciej Kalkowski, Micha{\l} Karo{\'n}ski, and Florian Pfender.
\newblock Vertex-coloring edge-weightings: towards the 1-2-3-conjecture.
\newblock {\em J. Combin. Theory Ser. B}, 100(3):347--349, 2010.

\bibitem{MR2047539}
Micha{\l} Karo{\'n}ski, Tomasz {\L}uczak, and Andrew Thomason.
\newblock Edge weights and vertex colours.
\newblock {\em J. Combin. Theory Ser. B}, 91(1):151--157, 2004.

\bibitem{MR2178357}
Sanjeev Khanna, Joseph Naor, and F.~Bruce Shepherd.
\newblock Directed network design with orientation constraints.
\newblock {\em SIAM J. Discrete Math.}, 19(1):245--257 (electronic), 2005.

\bibitem{z3}
M.~Khatirinejad, R.~Naserasr, M.~Newman, B.~Seamone, and B~Stevens.
\newblock Digraphs are 2-weight choosable.
\newblock {\em Electron. J. Combin.}, 18(1):Paper 21,4, 2011.

\bibitem{MR2296137}
{\L}ukasz Kowalik.
\newblock Approximation scheme for lowest outdegree orientation and graph
  density measures.
\newblock In {\em Algorithms and computation}, volume 4288 of {\em Lecture
  Notes in Comput. Sci.}, pages 557--566. Springer, Berlin, 2006.

\bibitem{MR689264}
Daniel Leven and Zvi Galil.
\newblock N{P}-completeness of finding the chromatic index of regular graphs.
\newblock {\em J. Algorithms}, 4(1):35--44, 1983.

\bibitem{MR2087903}
V.~Venkateswaran.
\newblock Minimizing maximum indegree.
\newblock {\em Discrete Appl. Math.}, 143(1-3):374--378, 2004.

\bibitem{MR0180505}
V.~G. Vizing.
\newblock On an estimate of the chromatic class of a {$p$}-graph.
\newblock {\em Diskret. Analiz No.}, 3:25--30, 1964.

\bibitem{MR1367739}
Douglas~B. West.
\newblock {\em Introduction to graph theory}.
\newblock Prentice Hall Inc., Upper Saddle River, NJ, 1996.

\end{thebibliography}







\end{document}